% This is LLNCS.DEM the demonstration file of
% the LaTeX macro package from Springer-Verlag
% for Lecture Notes in Computer Science,
% version 2.4 for LaTeX2e as of 16. April 2010
%
\documentclass{llncs}
\usepackage{subfig}
\usepackage{amsfonts,amsmath,amstext,amssymb}

\usepackage{algorithm,algorithmicx}	%pseudocode
\usepackage{algpseudocode}			%pseudocode

\usepackage{enumerate}
\usepackage{paralist}

\usepackage{todonotes}
\usepackage{epstopdf}
\usepackage{color}

\usepackage{pgf}
\usepackage{tikz}
\usetikzlibrary{arrows,automata}
\usepackage[all]{xy}

\usepackage{hyperref}

\begin{document}

\title{D-Iteration: diffusion approach for solving PageRank }
\titlerunning{D-Iteration: a diffusion approach for solving PageRank }  % abbreviated title (for running head)
%                                     also used for the TOC unless
%                                     \toctitle is used
%
\author{Dohy Hong\inst{1} \and The Dang Huynh\inst{2}
\and Fabien Mathieu\inst{2}}
\authorrunning{Dohy Hong et al.} % abbreviated author list (for running head)
%
%%%% list of authors for the TOC (use if author list has to be modified)
%\tocauthor{Ivar Ekeland, Roger Temam, Jeffrey Dean, David Grove,
%Craig Chambers, Kim B. Bruce, and Elisa Bertino}
%
\institute{Samsung
\and
Alcatel-Lucent Bell Labs France}

\maketitle              % typeset the title of the contribution

\begin{abstract}
In this paper we present a new method that can accelerate the computation of the PageRank importance vector. Our method, called D-Iteration (DI), is based on the decomposition of the matrix-vector product that can be seen as a fluid diffusion model and is potentially adapted to asynchronous implementation.
We give theoretical results about the convergence of our algorithm and we show through experimentations on a real Web graph that DI can improve the computation efficiency compared to other classical algorithm like Power Iteration, Gauss-Seidel or OPIC . 
\end{abstract}

\section{Introduction}

%Motivation: Google has brought ranking algorithms to a new level. 
%
%
%Contribution :
%This article presents D-Iteration, a PageRank computation algorithm inspired by importance diffusion interpretation.
%
%We describe the algorithm, prove bounds and expectancy for its convergence speed and compare its performance against the PR algorithms that have been proposed in the litterature.
%
%Resumer les perfs.
%
%
%Roadmap :
%Dans la prochaine section, on définit les bases : definition du PageRank, différentes métriques de performance,  algos qui existent pour le résoudre (avec les valeurs connues).
%
%La section truc pose le principe de la DIteration: algorithme de base, extensions si place, preuves de convergence.
%
%Enfinm la sectionblah présente quelques évaluations numériques.

PageRank is a link analysis algorithm that has been initially introduced in \cite{BP99} and used by the Google Internet search engine. It assigns a numerical value to each element of a hyper-linked set of nodes, such as the World Wide Web. The algorithm may be applied to any collection of entities (nodes) that are linked through directional relationships. The numerical value assigned to each node called PageRank, is supposed to reflect the structural importance of nodes.

Although PageRank may today only be a small part of Google's ranking algorithm (the complete algorithm is obviously kept secret, but it seems to take into account hundreds of parameters, most of them been related to the user's profile), it stays appealing, especially in the research community, as it balances simplicity of definition, ranking efficiency and computational challenges.

Among these challenges are the growing size of the dataset (Web graphs can have tens of billions of nodes) and the dynamics ot the structure that requires frequent updates.

The complexity of computing the PageRank of a graph rapidly increases with the number of nodes, as it is equivalent to computing an eigenvector on some huge, matrix, and efficient and accurate methods to compute eigenvalues and eigenvectors of arbitrary matrices are in general a difficult problem. In the particular case of the PageRank equation, several specific solutions were proposed and analysed \cite{LM04,BM05} including the power method \cite{BP99} with adaptation \cite{ST03} or extrapolation \cite{TS03,CG03}, or the adaptive on-line method \cite{AP03}, etc.

In this paper we present some theoretical results of D-Iteration (DI): mathematical definition, convergence to the fixed point, measurement of the distance to the limit and solution for dynamic graph update issues. The results are validated on a real dataset.

% This work points out how DI improves the OPIC algorithm \cite{AP03}. The two methods are based on a diffusion approach and consist in dealing with two vectors: fluid vector in DI (or cash vector in OPIC) and history vector in both of them. The comparison focuses on three aspects: the convergent fixed point, stopping condition and the distance to the limit. As we will see later in the paper, given a target error (or precision) history vector of DI converges to the PageRank vector while that of OPIC is convergent (without explicit stopping condition) more slowly than DI to the same importance vector. With DI, it is also possible to manage the distance of how far the history vector is to the fixed point based on the residual fluid remaining in the system.

%\fm{See how to incorporate DI more thoroughly}

The rest of the paper is organized as follows. Section~\ref{sec:the-pagerank-challenge} recalls the PageRank equation and exposes some of the main methods used to solve it. Section~\ref{sec:d-iteration-algorithm} formally defines the D-Iteration method and provides a few theoretical results on its convergence and use in a dynamic context.
Lastly, Section~\ref{sec:experiments} proposes a numerical evaluation of DI's performance and Section~\ref{sec:conclusion} concludes.

\section{The PageRank challenge}\label{sec:the-pagerank-challenge}

%The term PageRank is well-known today as the name of an algorithm used to assign numerical values to nodes in a directed graph, particularly web graph, where nodes can be web pages and edges are hyper-links among them. This value represents how important a node is. PageRank was introduced by Brin and Page~\cite{BP98} as one of the metrics of Google search engine to determine how to order the pages returned by a web search query.

In this section, we shortly introduce the definition of PageRank along with some standard algorithms to compute it. For more detailed information, one may for example see \cite{LM04}.

\subsection{Definition}

The informal definition of PageRank is rather simple: it is an importance vector over Web pages such that important pages are referenced by important pages~\cite{BP98}.

%This self-referenced statement suggests some kind of iterative solution.

More formally, let $ G=(V,E) $ be a weighted, oriented, graph. The size of $ G $ is $ n=|V|$ and $ w_{i,j}>0 $ is the weight of edge $ (i,j)\in E $.
$ G $ represents a set of nodes their (weighted, oriented) relationships. In \cite{BP98}, $ G $ was a Web graph, $ V $ representing web pages and $ E $ hyperlinks, but the principle applies to most structured sets.

Let $ P $ be a $ n\times n $ diffusion matrix defined by:

\begin{equation}
P_{i,j}=\left\{ 
\begin{array}{l}
\frac{w_{j,i}}{\sum_{(j,k)\in E}w_{j,k}} \text{ if $ (j,i) \in E$,}\\
0 \text{ otherwise.}
\end{array}
\right.
\label{eq:def-p}
\end{equation}

%\fm{Use right matrics instead of left ones? (cf wikipedia)}

$ P $ is a left substochastic matrix, column $ i $ summing to $ 1 $ if node $ i $ has an outgoing edge, $ 0 $ if $ i $ is a dangling node. Note that:
\begin{itemize}
\item For unweighted graphs, the expression of $ P $ is simpler: for $ (j,i)\in E $, we just have $ P_{i,j}=1/\textrm{out}(j) $, where $\textrm{out}(j) $ is the out-degree of $ j $.
\item Some variants of PageRank require $ P $ to be stochastic.
%(this is not the case for the D-Iteration method)
For these variants, one usually pads the null columns of $ P $ with $ \frac{1}{n} $ (\emph{dangling nodes completion}).
\end{itemize}

%\subsection{PageRank definitions}

$ P $ represents how importance flows from node to node. When it is stochastic, it represents the Markov chain over $ V $ implied by the edges $ E $. In that case, the PageRank can be defined as a stationary state of the Markov chain, that is a distribution $ x $ over $ V $ that verifies
\begin{equation}
x=Px\text{.}
\label{eq:pagerank-simple}
\end{equation}
Note that $ x $ is unique if $ G $ is strongly connected.

%\fm{aperiodicity is for uniqueness of maximal eigenvalue / convergence}

% assuming that $ G $ is strongly connected and aperiodic, PageRank can be defined the unique distribution $ x $ over $ V $ that verifies:

In practice, the following variant is used instead of \eqref{eq:pagerank-simple}:
%A more popular variant consists in mitigating the importance propagation with some damping factor $ 0<d<1 $ usually set to $ d=0.85 $. Using a damping factor, the equation to solve becomes
\begin{equation}
x=dPx+(1-d)Z\text{,}
\label{eq:pagerank-pf}
\end{equation}
\noindent where $ 0<d<1 $ is called \emph{damping} (often set to $ 0.85 $), and $ Z $ a \emph{default distribution} over $ V $ (often set to the uniform distribution).

If $ P $ is stochastic, the solution $ x $ of \eqref{eq:pagerank-pf} is a distribution, which corresponds to the Markov chain defined by: with probability $ d $, use the Markov chain induced by $ P $; otherwise jump to a random node according to distribution $ Z $.

Introducing parameters $ d $ and $ Z $ has many advantages:
\begin{itemize}
\item It guarantees the existence and uniqueness of a solution for any substochastic matrix $ P $, without any assumption on $ G $.
\item It speeds up the PageRank computation (cf below).
\item Parameter $ d $ introduces some locality: influence of a node at distance $ k $ is reduced by a factor $ d^k $. This strengthens the impact of the local structure and mitigates the possibility of malicious PageRank alterations through techniques like links farm\cite{BM05}.
\item Parameter $ Z $ allows to customize the PageRank. For instance, one can concentrate the default importance on pages known to talk about some given topic to create a topic-sensitive PageRank\cite{HT02}.
\end{itemize}

\label{sec:pagerank}

In the rest of the paper, unless stated otherwise, we focus on solving \eqref{eq:pagerank-pf}. Writing the solution is straightforward:
\begin{equation}
x=(1-d)(I-dP)^{-1}Z\text{, where $ I $ is the identity matrix.}
\end{equation}

However, such a direct approach cannot be used due to the size of $ P $ that forbids an explicit computation of $ (I-dP)^{-1} $. Instead, one can use different iterative methods (cf \cite{LM04}). 

\subsection{Power Iteration}

The simplest approach is \textbf{Power Iteration (PI)}: starting from an initial guess vector $x_0$, the stationary PageRank vector is iteratively computed using \eqref{eq:pagerank-pf}:
\begin{equation}
x_{k+1}=dPx_{k}+(1-d)Z\text{,}
\end{equation}
until the change between two consecutive vectors is negligible. It is straightforward that the error decays by a factor at least $ d $ at each iteration (hence one of the interests of introducing a damping factor). PI requires to maintain two vectors $ x_k $ and $ x_{k+1} $.

% By definition, L1-norm of a vector $u$ is defined as $|u|_1= \sum_{i=1}^{N}|u_i|$. Since $P$ is left-substochastic and the dominant (or the largest) eigenvalue $\rho({P})=1$, it has been proven in \cite{HK03} that the convergence speed of PI is essentially proportional to the subdominant (or the second largest) eigenvalue which is equal to the damping factor $d$. Thus the convergence rate of PI is the rate at which $d^k\rightarrow 0$.  Apparently, this convergence is slow because it depends solely on the decaying value $d$. In addition, it needs to maintain two vector $x_k$ and $x_{k+1}$ at any iteration $k^{th}$ to function. However, it is easy to implement PI thanks to the freedom of matrix-vector multiplication.

%Present the algorithm. Convergence (with and without damping). Slow convergence and needs to maintain two vectors, but easy to parallelize.

\subsection{Gauss-Seidel}

The \textbf{Gauss-Seidel (GS)} applied to PageRank consists in using the updated entries of $ x_k $ as they are computed. During an iteration round, entries $ x_{k+1}(i) $ are computed from $ i=1 $ to $ i=n $ using:
\begin{equation}
x_{k+1}(i)=d\left(\sum_{j<i} P_{i,j}x_{k+1}(j)+\sum_{j\geq i} P_{i,j}x_k(j) \right)+(1-d)Z(i)\text{.}
\end{equation}

Thanks to the immediate update, one needs to maintain only one vector and the convergence is faster, typically by a factor 2 asymptotically. The main downside of the update mechanism is the necessity to access the entries in a round-robin fashion, which can cause problems in a distributed scenario.
Note that Gauss-Seidel belongs to a larger class of methods called Successive Overrelaxation (SOR), but other SOR variants are seldom used for Web PageRank computations\cite{SOR}.

%is another way to compute the PageRank vector. $A priori$, GS is used to find a vector $x$ in an equation of the form $Ax=b$, where $A$ is a matrix and $b$ is a vector, as followings:
%\begin{center}
%$\displaystyle x_i^{(k+1)}=\frac{1}{a_{ii}}\left(b_i- \displaystyle\sum_{j<i}a_{ij}x_j^{(k+1)}-\displaystyle\sum_{j>i}a_{ij}x_j^{(k)} \right)$\textbf{,}
%\end{center}
%where $i=1,2,...,N$. To solve the PageRank equation, one can define the matrix $A=(I-dP)$ and $b=(1-d)Z$ where $I$ is the identity matrix. The main difference between PI and GS is that with PI, vector $x^{(k+1)}$ is calculated by using only elements of vector $x^{(k)}$ whereas GS exploits right away the elements $x_j^{(k+1)}$ to compute $x_i^{(k+1)}$ for $j<i$. Thanks to that, GS has better performance than PI, both in convergence speed and memory usage (GS keeps only one result vector at each iteration) in many cases. As a consequence, GS limits the freedom of when a vector element can be iterated and it could be perceived as a downside in asynchronous computation.

%Gauss-Seidel; Present the algorithm. Intuition: values are updated asap. Better convergence and only one PR vector to maintain (if no explicit computation of epsilon), but harder to implement?

\subsection{Online Page Importance Computation}
\label{sec:opic}

An algorithm very close to ours is the \textbf{Online Page Importance Computation (OPIC)} proposed in \cite{AP03}. Its core idea: most PageRank algorithms implicitly use a \emph{pull} approach, where the state of a node is updated according to the states of its incoming neighbors. By contrast, OPIC proposes a \emph{push} approach, where the state of a node is read and used to update the states of its outgoing neighbors. In details, OPIC focuses on solving \eqref{eq:pagerank-simple} for a modified graph $ G'=(V\cup z,E\cup J) $, where $ z $ is a virtual \emph{zap} node and $ J=\left(V\times z\right) \cup \left(z\times V\right) $ is all possible edges between $ V $ and $ z $. This was introduced to make $ P $ stochastic and irreducible, allowing \eqref{eq:pagerank-simple} to admit a unique solution.

OPIC algorithm works as follows: initially, each node receives some amount of \emph{fluid} (a non-negative number) and a null history record. A scheduler, which can be associated to a web crawler, iterates among the nodes. When a node $ i $ is selected, its fluid $ F(i) $ is, in order,
\begin{itemize}
\item credited to its history: $ H(i)=H(i)+F(i) $;
\item equally pushed to its neighbors: for all $ j $ that are outgoing neighbors of $ i $, $ F(j)=F(j)+\frac{F(i)}{\textrm{out}(i)} $;
\item cleared: $ F(i)=\frac{F(i)}{\textrm{out}(i)} $ if $ i $ has a loop, $ F(i)=0 $ otherwise.
\end{itemize}

It has been shown that as long as the scheduler is fair (i.e. any given node is selected infinitely often) then the history vector converges, up to normalization, to the desired solution \cite{AP03}.

The main advantage of OPIC is its flexibility. In particular, it is easy to adapt and incorporate to a continuous, possibly distributed, Web crawler, allowing to get a dynamic, lightweight, PageRank importance estimation. One drawback is that it is not designed to work with \eqref{eq:pagerank-pf}.

\subsection{Other Variants}

Due to lack of space, we only briefly describe a few other methods that have been proposed.

The Generalized Minimal Residual (GMRES) \cite{SS86} provides an approach using Arnoldi process to find the stationary vector in Krylov subspace. It is claimed to perform better than other algorithms in terms of iterations, but it requires more elementary computation steps per iteration.

Quadratic Extrapolation \cite{CG03} uses estimates of secondary eigenvectors of $ P $ to speed up the convergence of PI.

Lastly, a few methods have been proposed that take advantage of the locality of hyperlinks (most hyperlinks are internal to Web sites) to decompose PageRank computation into intra-site and extra-site values \cite{WD04,blockrank,flowrank}. These methods are especially adapted to distributed computation, but they usually only output an approximation of the solution of \eqref{eq:pagerank-pf}.
%
%
%Insert here GMRES and second eigenvalue.
%
%\textbf{Generalized Minimal Residual (GMRES)} \cite{SS86} provides an approach using Arnoldi process to find the stationary vector in Krylov subspace. GMRES also has $restart$ value to limit the size of the subspace (at the expense of convergence speed). This method uses the Arnoldi iteration to find the unknown $x$ in the equation $\textbf{A}x=b$ on the Krylov subspace $\mathcal{K}(\textbf{A},r_0)$ where $r_0=b-\textbf{A}x_0$ and $x_0$ is the initial vector. For various $restart$ values, this method performs better than other algorithms in terms of iterations but, in contrast, it requires more elementary operations caused by a non-negligible cost (\emph{e.g.,} vector orthogonalization).
%
%
%Add FlowRank
%
%
%
%Along with the theoretical improvements, the gain factor coming from distributed calculation has also been studied such as in \cite{WD04}. Its main idea stems from the locality property of web hyperlinks, empirically showing that they probably happen locally among web pages of the same web site, and among web sites with a much lower probability. By combining the local (page-based) and the global (site-based) PageRank values, we can produce an approximation of the true PageRank vector. Nevertheless, the approximative vector cannot satisfy a demand of high precision PageRank vector.

\section{D-Iteration Algorithm}\label{sec:d-iteration-algorithm}

Our proposal, D-Iteration, aims at solving Equation \eqref{eq:pagerank-pf} with an efficiency similar to Gauss-Seidel while keeping the scheduling flexibility offered by OPIC. This result in a fluid diffusion approach similar to OPIC with some damping added to the mix.

\subsection{Definition}

In the following, we assume that a deterministic or a random diffusion sequence $\mathcal{I} = \{i_1, i_2, ..., i_k,...\}$ with $i_k \in \{1,..,n\}$ is given. We only require that the number of occurrences of each value of $i_k$ in $\mathcal{I}$ is infinite (the scheduler $ \mathcal{I} $ is fair). $\mathcal{I}$ is not obliged to be fixed in advance but can be adjusted on the fly as long as the fairness property stands.

%This condition ensures the fairness of the algorithm where all nodes are chosen infinitely often to diffuse its fluid.

Like with OPIC, we have to deal with two variable vectors at the same time: a \emph{fluid vector} $F$, initially equal to $ (1-d)Z $ and a \emph{history vector} $H$, initially null for all nodes.
When a node is selected, its current fluid value is added to its history, then a \emph{fraction $ d $} of its fluid is equally pushed to its neighbors and its fluid value is cleared.

Formally, the fluid vector $ F $ associated to the scheduler $ \mathcal{I} $ is iteratively updated using:
\begin{eqnarray}
F_0 &=& (1-d) Z\text{,}\\
F_k &=& F_{k-1} + d F_{k-1}(i_k) P e_{i_k}  - F_{k-1}(i_k) e_{i_k}\text{,}\label{eq:defF}
\end{eqnarray}

where $e_{i_k}$ is the standard basis vector corresponding to $ i_k $.

The second term in \eqref{eq:defF} represents the damped diffusion and the third term clears the local fluid (up to loops). Similarly to OPIC, an iteration reads one value, $ F_{k-1}(i_k) $ and updates $ i_k $ and its outgoing neighbors. Note that F is always non-negative.

We also formally define the history vector $H$:

\begin{eqnarray}
H_0 &=& \vec{0}\text{,} \\
H_k &=& H_{k-1}+F_{k-1}(i_k)e_{i_k}\text{.}
\end{eqnarray}
By construction, $H_k$ is non-decreasing with $ k $.
%an increasing function (all components are positive). DI is presented in Algorithm~\ref{DI}.
%
%\fm{Add classic PR?}
%
%\begin{algorithm}
%\caption{D-Iteration algorithm: $x=dPx+(1-d)V$  }
%\label{DI}
%\begin{algorithmic}[1]
%	\State $H[i]=0;$\Comment{Initializing history vector}
%	\State $F[i]=(1-d)V[i];$\Comment{Initializing diffusion vector}
%	\State $k=1;$%\Comment{Number of iterations}
%	\While{$\left(|F|>Target\_Error\right)$}
%		\State Choose some node $i_k;$
%		\State $sent=F[i_k];$
%		\State $F[i_k]=0;$
%		\State $H[i_k]+\mspace{-5mu}=sent;$
%		\For {all child node $j$ of $i_k$}
%			\State $F[j]+\mspace{-5mu}=sent\times dP_{ji_k};$
%		\EndFor
%		\State $k$++;		
%	\EndWhile
%\end{algorithmic}
%\end{algorithm}

\subsection{Convergence}

The following Theorem states the convergence of the D-Iteration algorithm:

\begin{theorem}
\label{thm:conv}
For any fair sequence $\mathcal{I}$, the history vector $H_k$ converges to the unique vector $x$ such that $x=dPx + (1-d)Z$:
$$\lim\limits_{k\rightarrow\infty}H_k=(1-d)(I-dP)^{-1}Z\text{.}$$

Moreover, we have
\begin{equation}\label{eq:SL}
|x - H_k| \leq  \frac{|F_k|}{1-d}\text{, where $|\cdot|$ is the $L_1$ norm.}
\end{equation}

\end{theorem}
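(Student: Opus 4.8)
The plan is to track a conserved quantity: the total fluid plus the total history should always equal the total initial fluid injected. First I would establish the key invariant that for every $k$,
\[
H_k + F_k = (1-d)Z + dP\,H_k\text{,}
\]
by induction on $k$. The base case $k=0$ is immediate since $H_0=\vec 0$ and $F_0=(1-d)Z$. For the inductive step, one adds the update equations for $H_k-H_{k-1}=F_{k-1}(i_k)e_{i_k}$ and $F_k-F_{k-1}=dF_{k-1}(i_k)Pe_{i_k}-F_{k-1}(i_k)e_{i_k}$, so that $H_k-H_{k-1}+F_k-F_{k-1}=dF_{k-1}(i_k)Pe_{i_k}=dP(H_k-H_{k-1})$; combining with the inductive hypothesis gives the claim. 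This identity says that $H_k$ is an exact solution of the PageRank equation \emph{up to the residual} $F_k$: rearranging, $(I-dP)H_k = (1-d)Z - F_k$, hence $H_k = (1-d)(I-dP)^{-1}Z - (I-dP)^{-1}F_k = x - (I-dP)^{-1}F_k$.

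Next I would prove that $|F_k|\to 0$, which by the displayed relation and the bound $\|(I-dP)^{-1}\|_1 \le \frac{1}{1-d}$ (since $P$ is substochastic, $\|dP\|_1\le d<1$, so the Neumann series converges and $\|(I-dP)^{-1}\|_1\le\sum_{m\ge0}d^m=\frac1{1-d}$) will simultaneously give convergence $H_k\to x$ and the error bound \eqref{eq:SL}. To see $|F_k|\to0$: each $F_k$ is non-negative, and one iteration at node $i_k$ changes the $L_1$ mass by $|F_k|-|F_{k-1}| = -F_{k-1}(i_k) + d\,F_{k-1}(i_k)\,|Pe_{i_k}| = -(1-d\sigma_{i_k})F_{k-1}(i_k)$, where $\sigma_{i_k}=|Pe_{i_k}|\le 1$ is the column sum. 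Thus $|F_k|$ is non-increasing and each step removes at least a fraction $(1-d)$ of the fluid sitting at $i_k$. Since the sequence is fair, every node is visited infinitely often; a standard argument (summing the telescoping losses, or observing that $\sum_k (1-d)F_{k-1}(i_k)\le |F_0|<\infty$ forces $F_{k-1}(i_k)\to 0$ along the visits to each fixed node, and then using the monotone structure of how fluid accumulates at a node between visits) shows $|F_k|\downarrow 0$.

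The main obstacle is the last step — rigorously deducing $|F_k|\to 0$ from fairness alone, without any quantitative bound on how often nodes are scheduled. The subtlety is that fluid pushed to a node between its selection times can accumulate, so one cannot just say "the fluid at $i_k$ is small." The clean way around this is to use the total-mass bound $\sum_{k\ge1}(1-d)F_{k-1}(i_k) \le |F_0| = 1-d$ (finite), which holds because $|F_k|\ge 0$ and the losses telescope; hence $\sum_k F_{k-1}(i_k)<\infty$, so in particular the amount of fluid \emph{discharged} is finite, and since essentially all injected mass must eventually be discharged at each node under fairness, the residual $|F_k|$ must vanish. Once $|F_k|\to 0$ is in hand, plugging into $H_k = x - (I-dP)^{-1}F_k$ and taking norms yields both $\lim_k H_k = x$ and $|x-H_k| = |(I-dP)^{-1}F_k| \le \frac{|F_k|}{1-d}$, completing the proof.
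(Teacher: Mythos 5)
Your proposal is correct, and its backbone --- the invariant $H_k+F_k=(1-d)Z+dP\,H_k$ proved by induction, the rearrangement $H_k=x-(I-dP)^{-1}F_k$, and the Neumann-series bound $\|(I-dP)^{-1}\|_1\le\frac{1}{1-d}$ from substochasticity --- is exactly the paper's proof. The one place you genuinely diverge is the final step, showing $|F_k|\to 0$. The paper argues by \emph{epochs}: it fixes $k$ and takes $k'>k$ such that every node is scheduled at least once in between; since the fluid at a node is non-decreasing until it is selected and at least a fraction $(1-d)$ of it is destroyed when it is, the total loss over the epoch is at least $(1-d)|F_k|$, giving the quantitative contraction $|F_{k'}|\le d\,|F_k|$ (which the paper then reuses to get the rate $d^{\lfloor k/n\rfloor}$ for Round-Robin schedulers). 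You instead telescope the per-step losses to get $\sum_k(1-d)F_{k-1}(i_k)\le|F_0|<\infty$, deduce $F_{k-1}(i_k)\to 0$, and then use the same monotone-accumulation-between-visits observation plus fairness to conclude $F_m(j)\to 0$ for each fixed $j$. That route is valid, but note two things: it yields only qualitative convergence with no rate, whereas the paper's epoch argument is what powers Theorem~\ref{thm:rr}; and your closing phrase ``essentially all injected mass must eventually be discharged at each node under fairness'' is not itself an argument --- the parenthetical version (summability $\Rightarrow$ $F_{t_{r+1}-1}(j)\to 0$ along the selection times $t_r$ of $j$, and $F_m(j)\le F_{t_{r+1}-1}(j)$ for $t_r\le m<t_{r+1}$ by monotonicity between selections) is the one you should write out, since it is the step that actually closes the proof.
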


\begin{proof}

We first prove the equality:
\begin{equation}\label{eq:HnFn}
H_k + F_k = F_0 + d P H_k.
\end{equation}

This is straightforward by induction:  \eqref{eq:HnFn} is true for $ k=0 $; assuming it is true for $k-1$, we have

$$
\begin{array}{lcl} 
H_{k} + F_{k} &=&
H_{k-1}+F_{k-1}(i_k)e_{i_k}+F_{k-1} + d F_{k-1}(i_k) P e_{i_k} - F_{k-1}(i_k)e_{i_k}\\
&=& F_0 + d P (H_{k-1}+  F_{k-1}(i_k)e_{i_k})=F_0 + d P H_k.
\end{array}
$$

From the equation \ref{eq:HnFn}, we have:

$$
\begin{array}{lcl}
H_k &=& (I-dP)^{-1}(F_0-F_k)\\
&=&  x - \sum_{i=0}^{\infty}d^iP^iF_k 
\end{array}
$$

Noticing that $ P $ is substochastic, we get 

$$
|x - H_k|=|\sum_{i=0}^{\infty}d^iP^iF_k|\leq \sum_{i=0}^{\infty}d^i|F_k|=\frac{|F_k|}{1-d}\text{.}
$$

All we need is to show that $|F_k|$ tends to 0. Notice that the total available fluid is non-increasing. That being said, the fluid of a given node is non-decreasing until it is scheduled, and when it is, a quantity $ (1-d) $ of it is ``lost'' due to the damping (or more if it is a dangling node). Given these two observations, let us consider a time $ k $ and another time $ k'>k $ such that all nodes have been scheduled at least once between $ k $ and $ k' $ (this is always feasible thanks to the fairness assumption). For each node, its fluid at the time of its first scheduling after $ k $ is greater than its fluid at time $ k $, so we have $|F_{k'}|\leq d |F_k|$. This is true for any $ k $ (including $ k' $) and concludes the proof.
%
%Because $F_0=(1-d)V$ and $F_n$ tends to zero (thanks to the fair sequence $\mathcal{I}$ and damping factor $d<1$), so:
%\begin{center}
%$H_n \leq  (1-d)(I-dP)^{-1}V$
%\end{center}
%By construction $H_n$ is increasing per entry, it is thus convergent to the claimed limit which is exactly the unique vector $x$ by \eqref{eq:iteration4}.
\end{proof}

%\begin{theorem}\label{theorem:Sinf}
%We have the equality:
%\begin{eqnarray}\label{eq:S}
%x_{\infty} - H_n &=& \sum_{k=0}^{\infty} d^k P^k F_n = ( I - d P )^{-1} F_n
%\end{eqnarray}
%and as a direct consequence, we have:
%\begin{eqnarray}\label{eq:SL}
%|x_{\infty} - H_n| &=& \frac{|F_n|}{1-d}
%\end{eqnarray}
%where $|\cdot|$ is the $L_1$ norm.
%\end{theorem}
%
%\begin{proof}
%We have:
%\begin{center}
%$x_{\infty} = dPx_{\infty} + F_0$
%\end{center}
%Using the equation \eqref{eq:HnFn}, we have:
%\begin{eqnarray*}
%x_{\infty} - H_n &=& dPx_{\infty} + F_0 - ( F_0 + P H_n ) + F_n\\
%&=& dP(x_{\infty} - H_n) + F_n\\
%&=& \sum_{k=0}^{\infty} d^k P^k F_n
%\end{eqnarray*}
%By iteration, we get \eqref{eq:S}. Note that $P$ is a left stochastic matrix (square matrix of non negative real numbers and each column sum up to one), thus: 
%\begin{center}
%$|P^k F_n| = |F_n|$
%\end{center}
%and as consequence:
%\begin{center}
%\begin{eqnarray*}
%|x_{\infty} - H_n| &=& \sum_{k=0}^{\infty} d^k |F_n| =\frac{|F_n|}{1-d}\\
%\end{eqnarray*}
%\end{center}
%\end{proof}

\subsection{Implicit completion}

\label{ss:distance}

Equation \eqref{eq:SL} is an equality if $P$ is stochastic (for example thanks to dangling node completion), in which case we have  $|P^i F_k| = |F_k|$. %A fine tuning of D-Iteration with respect to dangling nodes is discussed in \ref{ss:distance}.

One may want to solve \eqref{eq:SL} for a completed $ P $. However, if the completion follows the distribution $ Z $, one observes that the result is just proportional to the one without completion, so this is just a matter of normalization.

It is more efficient  to perform the computation on a non-completed matrix (every time a dangling node is selected, all non-null entries of $ Z $ are updated if $ P $ is completed). The question is: can we control the convergence to the solution of the completed matrix while running the algorithm on the original one?

%
%
%In the PageRank context, we may have zero-columns in $P$ (if we don't do the $P$ completion operation). Indeed if those columns (corresponding to dangling nodes) are completed by $\vec{1}/N$, any iteration scheme would do useless computations. If we work on the matrix $P$ without completion, the limit we obtain needs to be renormalized (by a constant multiplication for diffusion approach or by constant addition for power iteration).

To address this problem precisely, we count the total amount of fluid that has left the system when a diffusion was applied on a dangling node. We call this quantity $l_k$ (at step $k$ of the DI). To compensate this loss and emulate completion, a quantity $d l_k Z$ should have been added to the initial fluid, leading to $(1-d+dl_k)Z$ instead of $(1-d)Z$. But then the fluid $dl_k Z$ would have produced after $k$ steps a leak $(d l_k^2/(1-d))Z$ on dangling nodes, which needs to be compensated\ldots 
% that disappears again by dangling nodes, etc. With respect to the initial condition $|F_0|$, we have:
%\begin{eqnarray*}
%|F_0|=(1-d) &\xrightarrow{add}& de_n\\
%de_n &\xrightarrow{add}& de_n(de_n/(1-d))\\
%de_n(de_n/(1-d)) &\xrightarrow{add}& de_n(de_n/(1-d))^2\\
%&...&
%\end{eqnarray*}

In the end, the correction that is required to compensate the effect of dangling nodes on the residual fluid $|F_k|$ consists in replacing the initial condition $|F_0|=(1-d)$ by  $|F'_0|$ such that:
\begin{center}
$\begin{array}{lcl}|F'_0| &=& (1-d) + d l_k + d l_k\displaystyle\frac{d l_k}{1-d} + d l_k\left(\displaystyle\frac{d l_k}{1-d}\right)^2 + ...\\
&=& (1-d) + d l_k\displaystyle\sum_{n=0}^{\infty}\left(\frac{d l_k}{1-d}\right)^n \\
&=& \displaystyle\frac{(1-d)^2}{1-d - d l_k}. \end{array}$
\end{center}

As $|F'_0|/|F_0|=(1-d)/(1-d-d l_k)$, $H_k$ needs to be renormalized (multiplication) by $(1-d)/(1-d-d l_k)$ so that the exact $L_1$ distance between $ x $ and the normalized $H$ is equal to:
$$
|x-\displaystyle\frac{1-d}{1-d-d l_k}H_k| = \displaystyle\frac{|F_k|}{1-d-d l_k}.
$$

To summarize, we can run the algorithm on the original matrix using 
$ \frac{|F_k|}{1-d-d l_k} $ as a stopping condition that guarantees the precision of the normalized result.

%In the DI approach, we update $e_n$ by:
%\begin{eqnarray*}
%e_n &+=& (F_n)_{i_n}.
%\end{eqnarray*}
%if $i_n$ is a dangling node.

\subsection{Schedulers}

The actual performance of DI is directly related to the scheduler used. A simple scheduler, which we call \texttt{DI-cyc}, is a Round-Robin (RR) one, where a given permutation of nodes is repeated as long as necessary.

\begin{theorem}
\label{thm:rr}
For any Round-Robin scheduler $\mathcal{I}$, we have: 
\begin{equation}\label{eq:rrconv}
|x - H_k| \leq  d^{\lfloor \frac k n \rfloor} \text{.}
\end{equation}

\end{theorem}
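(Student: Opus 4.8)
The plan is to obtain this as a quantitative corollary of Theorem~\ref{thm:conv}, exploiting the periodic structure of the Round-Robin schedule together with the bound \eqref{eq:SL}. The starting point is the key estimate already established inside the proof of Theorem~\ref{thm:conv}: if between two times $k$ and $k'>k$ every node is scheduled at least once, then $|F_{k'}|\le d\,|F_k|$. For a Round-Robin scheduler with period $n$, one full round of $n$ consecutive steps schedules each node exactly once, so this estimate applies in particular with $(k,k')=(mn,(m+1)n)$ for every $m\ge 0$.

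First I would run an induction on the number of completed rounds: from $|F_0|=(1-d)$ and $|F_{(m+1)n}|\le d\,|F_{mn}|$ one gets $|F_{mn}|\le d^{m}(1-d)$ for all $m\ge 0$. Next, for an arbitrary step $k$ I would write $m=\lfloor k/n\rfloor$, so that $k\ge mn$; since the total amount of fluid $|F_k|$ is non-increasing in $k$ (as observed in the proof of Theorem~\ref{thm:conv}), this yields $|F_k|\le|F_{mn}|\le d^{m}(1-d)$. Finally, plugging this into \eqref{eq:SL} gives
\[
|x-H_k|\le\frac{|F_k|}{1-d}\le d^{m}=d^{\lfloor k/n\rfloor},
\]
which is exactly \eqref{eq:rrconv}.

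There is no genuine obstacle here beyond bookkeeping; the one point that deserves care is checking that the window used in the per-round estimate really contains a full permutation of the nodes — i.e. that "one round" of the Round-Robin touches all $n$ nodes before any repetition — and that the monotonicity of $|F_k|$ is invoked in order to cover those $k$ that are not exact multiples of $n$. Both facts are immediate from the definitions, so the argument is short. (One could track the additional fluid lost at dangling nodes, as in the implicit-completion discussion of Section~\ref{ss:distance}, to get a marginally sharper constant, but for the stated bound this refinement is unnecessary.)
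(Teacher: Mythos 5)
Your proposal is correct and follows exactly the route the paper intends: it instantiates the window estimate $|F_{k'}|\le d\,|F_k|$ from the proof of Theorem~\ref{thm:conv} on the successive blocks of $n$ Round-Robin steps, uses monotonicity of $|F_k|$ for intermediate times, and concludes via \eqref{eq:SL}. The paper states this in one line ("a direct application of the proof of Theorem~\ref{thm:conv} considering successive sequences of $n$ steps"); you have simply written out the same argument in full.
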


The proof is a direct application of the proof of Theorem \ref{thm:conv} considering successive sequences of $ n $ steps.

Theorem \ref{thm:rr} ensures that D-Iteration performs at least as well as the PI method: in both cases, after a round where all nodes have been processed once, the error id reduced by at least $ d $.

While the bound can be tight for specific situations (for instance a clockwise scheduler applied to a counterclockwise-oriented cycle), it is conservative in the sense that it ignores that some of the fluid can be pushed multiple times during a sequence of $ n $ steps. For that reason, and keeping in mind that D-Iteration is a \emph{push} version of the Gauss-Seidel method, we expect that  \texttt{DI-cyc} will perform more like Gauss-Seidel in practice.

However, one strength of D-Iteration is the freedom in the choice of a scheduler. As the convergence is controlled by the remaining fluid $|F_k|$, a good scheduler is one that makes the fluid vanish as quickly as possible. This disappearance is caused by two main parameters: the damping factor $d$ and dangling nodes (cf above). Reminding that at time $ k $ a quantity $(1-d)F_{k-1}(i_k)  $ vanishes through damping, a natural greedy strategy would consist in selecting at each step $ i_k =\textrm{argmax}_{i=1,\ldots,n}F_{k-1}(i)  $.

The main drawback of such a strategy is the expensive searching cost. Therefore we propose a simple heuristic called \texttt{DI-argmax} as follows: we use a RR scheduler, but at each iteration, we run through the scheduler until we find a node that possesses a fluid greater or equal to the average fluid in the whole graph. The advantage of this method is that nodes with relatively low fluids will be skipped, avoiding unprofitable updates operation, with a searching cost lower than picking up the best node at each iteration.

\begin{theorem}
\label{thm:argmax}
Using \texttt{DI-argmax}, we have: 
\begin{equation}\label{eq:argg}
|x - H_k| \leq  \left(1-\frac{1-d}{n}\right)^k \approx e^{-(1-d)\frac k n}\text{.}
\end{equation}

\end{theorem}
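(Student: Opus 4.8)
The plan is to bound the residual fluid $|F_k|$ geometrically and then invoke the estimate $|x-H_k|\le |F_k|/(1-d)$ already established in Theorem~\ref{thm:conv}, together with the initial value $|F_0| = (1-d)$. So the whole task reduces to showing $|F_k| \le (1-d)\bigl(1-\tfrac{1-d}{n}\bigr)^k$.

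First I would record how the total fluid evolves in a single step. Taking the $L_1$ norm of \eqref{eq:defF}, using that $F$ stays non-negative and that $P$ is substochastic (so the column sum $|Pe_{i_k}|\le 1$), one gets
\begin{equation*}
|F_k| = |F_{k-1}| - F_{k-1}(i_k)\bigl(1-d\,|Pe_{i_k}|\bigr) \le |F_{k-1}| - (1-d)\,F_{k-1}(i_k)\text{.}
\end{equation*}
Thus at every step the total fluid drops by at least $(1-d)$ times the fluid of the node actually processed (and strictly more on a dangling node, which only helps).

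Next I would use the defining property of \texttt{DI-argmax}: the node $i_k$ processed at step $k$ satisfies $F_{k-1}(i_k)\ge |F_{k-1}|/n$, i.e.\ its fluid is at least the current average. Such a node always exists, since the maximum coordinate is never below the average, and because the underlying Round-Robin order cycles through all $n$ nodes the scan is guaranteed to reach it within one cycle; the skipped nodes trigger no update, hence change neither $F$ nor the step counter. Plugging this into the inequality above yields $|F_k|\le \bigl(1-\tfrac{1-d}{n}\bigr)|F_{k-1}|$, and by induction $|F_k|\le (1-d)\bigl(1-\tfrac{1-d}{n}\bigr)^k$. Combining with $|x-H_k|\le |F_k|/(1-d)$ gives \eqref{eq:argg}, and the approximation follows from $1-t\le e^{-t}$ with $t = (1-d)/n$.

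The argument is quite short once Theorem~\ref{thm:conv} is available; the only point needing a little care is the middle step, namely confirming that a node with at least the average fluid is always present and is actually the one selected by the scanning rule, but this is immediate by pigeonhole and the cyclic structure of the scheduler. I do not anticipate any genuine obstacle beyond phrasing that observation cleanly.
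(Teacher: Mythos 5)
Your argument is correct and is essentially the paper's own proof, which simply states that by construction $|F_k|\leq \left(1-\frac{1-d}{n}\right)|F_{k-1}|$ and invokes the bound from Theorem~\ref{thm:conv}; you have merely filled in the one-step norm computation, the pigeonhole observation, and the normalization $|F_0|=1-d$. No gap.
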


The proof is immediate, as by construction we have $ |F_k|\leq (1-\frac{1-d}{n})|F_{k-1}| $.

Note that the Theorem proves the convergence of \texttt{DI-argmax}, which is not a fair scheduler: for instance, after some time, it will ignore the transient nodes of the graph, which eventually have no fluid left. We conjecture that \eqref{eq:argg} is not tight (tightness would require to always choose an average node) and that the actual convergence should be faster. 

%we only diffuse nodes having $f$ greater than the average fluid in the whole graph of the $(i-1)^{th}$ iteration (or equal to $|(1-d)\times(1/N)|$ in the first iteration). The fairness of this strategy is apparent since $|F_n|$ is a decreasing function so that a certain node containing fluid $f$ will be eventually diffused at the time $f>|F_{n-1}|/N$. By using one additional variable to keep track of the total fluid amount of the previous iteration, this strategy can help detecting with a negligible cost if a node is worth a diffusion in the current iteration.

% diffuse $f_i[k]$ iff $f_i[k]>r$ where $r=\sum_{k}f_{i-1}/N$ at $i^{th}$ iteration

\subsection{Update equation}

The existing iterative methods (such as Gauss-Seidel iteration, Power Iteration, ...) can naturally adapt the iteration when $ G $ (and thus $ P $) is changed because they are in general independent of the initial condition (for any starting vector, the iterative scheme converges to the unique limit). The simplest way to adjust is to compute the PageRank of the new graph using the previous computation as starting vector.

This technique cannot be used in the case of DI so we need to provide an adapted result.

\begin{theorem}
Assume that after $ k_0 $ diffusions, the DI algorithm has computed the values $ (H_{k_0},F_{k_0}) $ for some matrix $ P $, and consider a new matrix $ P'$ (typically an update of $ P $). One can compute the unique solution of the equation $ x'=dP'x'+(1-d)Z $ by running a new D-Iteration with starting parameters
\begin{eqnarray}
F'_0 &=& F_{k_0}+d(P'-P)H_{k_0}\text{,}\\
H'_0 &=& H_{k_0}\text{.}
\end{eqnarray}
\label{thm:update}
\end{theorem}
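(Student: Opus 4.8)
The plan is to mimic the proof of Theorem~\ref{thm:conv}, the only genuinely new feature being that the restarted iteration carries a non-zero initial history and a possibly signed initial fluid. First I would record the analogue of the invariant \eqref{eq:HnFn} for a D-Iteration run on $P'$ with an arbitrary starting pair $(H'_0,F'_0)$, namely
\[
H'_k + F'_k = (H'_0 + F'_0) + dP'(H'_k - H'_0),
\]
which follows by exactly the same one-line induction as \eqref{eq:HnFn}, since the update rules \eqref{eq:defF} for $F'$ and $H'$ are untouched by the shift of $H'_0$. Then I would substitute the prescribed $H'_0 = H_{k_0}$ and $F'_0 = F_{k_0} + d(P'-P)H_{k_0}$ and invoke the invariant \eqref{eq:HnFn} satisfied at step $k_0$ by the original run, $H_{k_0}+F_{k_0}=F_0+dPH_{k_0}$, to obtain $H'_0+F'_0 = F_0 + dPH_{k_0} + d(P'-P)H_{k_0} = F_0 + dP'H'_0$. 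Plugging this back gives $H'_k+F'_k = F_0 + dP'H'_k = (1-d)Z + dP'H'_k$: the restarted sequence obeys precisely the relation that a fresh D-Iteration for $P'$ would obey.

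From this point the argument of Theorem~\ref{thm:conv} transfers verbatim: inverting $I-dP'$ yields $H'_k = x' - \sum_{i=0}^{\infty} d^i (P')^i F'_k$ with $x' = (1-d)(I-dP')^{-1}Z$ the unique solution of $x'=dP'x'+(1-d)Z$, and substochasticity of $P'$ gives $|x'-H'_k| \le |F'_k|/(1-d)$. So all that remains is to show $|F'_k|\to 0$ (for a fair continuation scheduler, which is the standing assumption).

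The one step that does not carry over directly is the vanishing of $|F'_k|$, because $F'_0 = F_{k_0}+d(P'-P)H_{k_0}$ need not be non-negative, so the monotonicity remark of Theorem~\ref{thm:conv} (``the fluid of a node is non-decreasing until it is scheduled'') does not apply as stated. I would circumvent this by linearity: for a \emph{fixed} scheduler the map $F_0\mapsto F_k$ defined by \eqref{eq:defF} is linear, so writing $F'_0 = G^{+}-G^{-}$ with $G^{\pm}\ge 0$ and running two auxiliary D-Iterations on $P'$ with this same scheduler, zero initial history, and initial fluids $G^{+}$ and $G^{-}$, one gets $F'_k = F^{+}_k - F^{-}_k$ where each $F^{\pm}_k\ge 0$ is a bona fide D-Iteration residual. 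Theorem~\ref{thm:conv} applied to each auxiliary run gives $|F^{\pm}_k|\to 0$, hence $|F'_k|\le |F^{+}_k|+|F^{-}_k|\to 0$, and therefore $H'_k\to x'$, which proves the theorem. The main obstacle is exactly this sign issue; everything else is a transcription of the proof of Theorem~\ref{thm:conv}.
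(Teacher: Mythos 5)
Your proposal is correct, and its algebraic core is exactly the paper's argument: apply the invariant \eqref{eq:HnFn} to the shifted run (the paper phrases this as applying it to the reduced history $H'_k-H_{k_0}$, you phrase it as an invariant with arbitrary $(H'_0,F'_0)$ --- these are the same identity), then use \eqref{eq:HnFn} at step $k_0$ of the original run to recognize $F_0+dP'H'_k$ on the right-hand side. Where you genuinely go beyond the paper is the last step. The paper's proof simply ``lets $k$ go to $\infty$'' and drops $F'_k$, i.e.\ it tacitly assumes both that $H'_k$ converges and that $F'_k\to 0$; neither is covered by Theorem~\ref{thm:conv} as proved, precisely because $F'_0=F_{k_0}+d(P'-P)H_{k_0}$ can have negative entries and the monotonicity/loss argument used there needs non-negative fluid. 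Your decomposition $F'_0=G^{+}-G^{-}$ together with the linearity of the fixed-scheduler map $F_0\mapsto F_k$ (each step is multiplication by $I-e_{i_k}e_{i_k}^{T}+dPe_{i_k}e_{i_k}^{T}$) reduces the signed case to two non-negative runs on the same realized fair sequence, to which the argument of Theorem~\ref{thm:conv} does apply, and so yields both $|F'_k|\to 0$ and the quantitative bound $|x'-H'_k|\le |F'_k|/(1-d)$. In short: same route, but you close a real gap in the published proof and obtain an error estimate the paper does not state.
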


A few remarks on Theorem \ref{thm:update}:
\begin{itemize}
\item It implies that one can continue the diffusion process when $P$ is regularly updated: we just need to inject in the system a fluid quantity equal to $d(P' - P) H_{k_0}$ and then change to the new matrix $P'$, keeping the same history. % If a distributed computation is to be used, we just need to synchronize the time from which $P'$ is applied.
\item The precision of the result directly relates to the quantity of fluid left. Here the precision induced by $ F_{k_0}+d(P'-P)H_{k_0} $ seems rather minimal, as the original fluid is only altered by the difference between the two matrices. In particular, if the difference $ P-P' $ is small, the update should be quickly absorbed.
\item For sake of clarity, we assumed that the set of nodes is the same between $ P $ and $ P' $, but the result can be extended to cope with variations of $ V $.
\item One can notice that this update may introduce negative fluids (this is indeed mandatory to correct values that are higher in $ H_{k_0} $ than in $x'$), but this has no practical impact on computation.
\end{itemize}

\begin{proof}
Call $ H_{\infty}' $ the asymptotic result of the new D-Iteration. We first use \eqref{eq:HnFn} on the reduced history $ H'_k-H_{k_0} $ (Equation \eqref{eq:HnFn} requires that the history is initially empty):
$$(H'_k-H_{k_0}) + F'_k = F'_0 + d P' (H'_k-H_{k_0}).$$

Letting $ k $ go to $ \infty $ leads to 
$$
\begin{array}{rl}
(H'_{\infty}-H_{k_0}) &= F'_0 + d P' (H'_{\infty}-H_{k_0})\\
&= F_{k_0}+d(P'-P)H_{k_0}+ d P' (H'_{\infty}-H_{k_0})\\
&=  d P' H'_{\infty} +  F_{k_0} -d P H_{k_0},
\end{array}
$$
which can be written

$$H'_{\infty} =d P' H'_{\infty} +  H_{k_0} + F_{k_0} -d P H_{k_0}\text{.} $$

Equation \eqref{eq:HnFn} ($H_{k_0} + F_{k_0} = F_0 + d P H_{k_0}.
$) concludes the proof.

\end{proof}

\section{Numerical evaluation}

%\subsection{Performance metrics}
%
%Time (and space)
%
%Round
%
%Steps
%
%Active steps

\label{sec:experiments}
In this section, we compare the convergence speed of DI with other algorithms exposed in Section \ref{sec:the-pagerank-challenge}: Power Iteration, Gauss-Seidel, and OPIC.

The results shown in this paper are based on the 
\texttt{uk-06-07} Web graph, which is compressed using techniques in \cite{BRSLLP},\cite{BV03} and available at \cite{webgraph}. This is a crawl done for DELIS project \cite{DELISProj}, a one-year snapshot (from May 2006 to May 2007) of the \texttt{.uk} domain. It contains 133 million nodes and  5.5 billion links.

\subsection{Settings}

All the algorithms are tuned to solve Equation \eqref{eq:pagerank-pf} using $ d=0.85 $ and $ Z\equiv \frac{1}{n} $.

In the case of OPIC, remind that the original version does not take into account of damping factor. In order be consistent with other algorithms, we emulate \eqref{eq:pagerank-pf} by running OPIC on the stochastic matrix 
$ P' $ defined by:
\begin{equation}
P'_{i,j}=\left\{ 
\begin{array}{l}
\frac{d}{\textrm{out}(j)}+\frac{1-d}{n} \text{ if $ (j,i) \in E$,}\\
\frac{1}{n} \text{if $ j $ is a dangling node,}\\
\frac{1-d}{n} \text{ otherwise.}
\end{array}
\right.
\label{eq:def-popic}
\end{equation}
Note that this emulation makes each diffusion rather costly as all entries need to be updated. It is only introduced to allow comparison with other methods, assuming all diffusions have the same cost, and should not be used in practice.

For DI, we used the two exposed variants, \texttt{DI-cyc} and \texttt{DI-argmax}. The same schedulers were used for OPIC, called 
\texttt{OPIC-cyc} and \texttt{OPIC-argmax}.
%
%
%\begin{itemize}
%\item \texttt{DI-cyc}: $k=1,2,...N$: at $i^{th}$ iteration, diffuse $F_i[k]$ iff $F_i[k]\neq 0$.
%\item \texttt{DI-argmax}: at $i^{th}$ iteration, diffuse $F_i[k]$ iff $F_i[k]>r$ where $r=\sum_kF_{i-1}[k]/N$.
%\end{itemize}
%
%In case of OPIC, the original version does not take into account of damping factor. In order to have a consistent input with DI, some modifications are made:
%\begin{inparaenum}[(a)]
%    \item all zero-columns of the matrix $P$ are replaced by the uniform column vector  $[1/N,...,1/N]^t$ to have the completed matrix $\overline{P}$, then
%    \item appying the damping factor $d=0.85$ to construct matrix $\mathcal{P}$ as follows:
%    \begin{eqnarray*}
%     \mathcal{P}=d\overline{P}+(1-d)V
%    \end{eqnarray*}
%    and \item modifying the diffusion operation with respect to the matrix $\mathcal{P}$ as in Algorithm~\ref{OPICdamping}. With these changes, the algorithm still holds its properties (e.g., $\sum_{i=1}^{N}C_t[i]=1$ at any $t^{th}$ step) and is convergent to an importance vector $\mathcal{X}'_{Importance}$ which is the same as the PageRank vector produced by DI or PI. We implement two OPIC strategies:
%\end{inparaenum}
%    \begin{itemize}
%    \item \texttt{OPIC-cyc}:  $k=1,2,...N$: at $i^{th}$ iteration, diffuse $C_i[k]$ iff $C_i[k]\neq 0$.
%    \item \texttt{OPIC-argmax}: at $i^{th}$ iteration, diffuse $C_i[k]$ iff $C_i[k]>r=1/N$.
%    \end{itemize}
Remember that the fluid amount $F$ is constant in OPIC, so the threshold that triggers diffusion in $\texttt{OPIC-argmax}$ is constant (it is the average fluid).
%\begin{algorithm}
%\caption{Modified diffusion operation of OPIC}
%\label{OPICdamping}
%\begin{algorithmic}[1]
%	\makeatletter\setcounter{ALG@line}{5}\makeatother
%	\State ...
%	\For {all child node $j$ of $i$}
%		\State $\mathcal{C}[j]+\mspace{-5mu}=\mathcal{C}[i]*\mathcal{P}_{ji};$
%	\EndFor
%	\State ...
%\end{algorithmic}
%\end{algorithm}
%
%To have a comparison standard, we also put PI which comprises of a succession of matrix vector multiplication:
%\begin{eqnarray*}
% x_{n+1}=\mathcal{P}x_n=[d\overline{P}+(1-d)V]x_n.
%\end{eqnarray*}
\begin{figure}
\begin{center}
  %\subfloat[Webgraph \texttt{uk-2007}]{\label{fig:uk_2007_100k}\includegraphics[width=0.40\textwidth]{Figures/uk_2007_OPIC_PI_DI.eps}}
  
  %\subfloat[Webgraph \texttt{cnr-2000}]{\label{fig:cnr_2000}\includegraphics[width=0.40\textwidth]{Figures/cnr_2000_OPIC_PI_DI.eps}}
  
%  \subfloat[\texttt{sk-2005}]{\label{fig:sk_2005}\includegraphics[width=0.40\textwidth]{}}
%  
%  \subfloat[\texttt{webbase-2001}]{\label{fig:webbase_2001}\includegraphics[width=0.40\textwidth]{}}
%  
%  \subfloat[\texttt{uk-06-07}]{\label{fig:uk-06-07}\includegraphics[width=0.40\textwidth]{}}
%  
\includegraphics[width=0.80\textwidth]{uk_06_07_OPIC_PI_GS_DI}
  
  \caption{Convergence to the stationary PageRank vector}
  \label{fig:exp}
 \end{center}
 %\vspace{-1cm}
\end{figure}

\subsection{Results}

The results are shown in Figure \ref{fig:exp}.

%To measure the performance, we compute the distance to the solution as a function of the number of rounds used by the algorithm.

The $x$-axis counts the number of rounds required used by the algorithms. For pull methods like PI and GS, a round corresponds to updating the $ n $ entries. For push methods like DI and OPIC, a round corresponds to $ n $ elementary diffusion steps.

%  to reach to a certain precision. One round equals to either one full matrix-vector multiplication of PI or diffusion of $N$ nodes in OPIC and DI.

The $y$-axis indicates the precision of the current vector $x_l$ at $l^{th}$ round compared to the real PageRank vector $x$ (the distance to the limit or $|x-x_l|$). With DI, the precision can be directly deduced from $|F_n/(1-d-de_n)|$. For other methods, a $10^{-9}$ approximation of $x$ was precomputed using DI.

The OPIC methods perform well during the first few rounds, with a clear advantage of \texttt{OPIC-argmax} over \texttt{OPIC-cyc}. However, the convergence slows down really hard after. Note that OPIC remains interesting as its primary goal was to provide lightweight PageRank estimates. The results only state that OPIC should not be used for precise PageRank estimations

Power Iteration has asymptotically the smallest convergence speed after OPIC, but a good starting performance. As a result, it remains a good candidate for rough estimates (up to $ 10^{-2} $), but other methods should be used if one needs to be more accurate.

The Gauss-Seidel method has the second best convergence speed, although it does not benefit from the quick start observed with PI. It still is a good candidate if one needs a simple and efficient way to compute a precise PageRank.

\texttt{DI-cyc} performs very similarly to Gauss-Seidel (although a little slower). This is in line with our interpretation that \texttt{DI-cyc} is a kind of \emph{push} version of Gauss-Seidel.

Lastly, \texttt{DI-argmax} clearly outperforms the other methods, being in par with \texttt{OPIC-argmax} after two rounds and achieving in 7 rounds the precision reached by Gauss-Seidel in 20 rounds. This shows that \texttt{DI-argmax} is a method of choice for all intended precisions, and suggests that Equation \eqref{eq:argg} really underestimates its actual performance.

\section{Conclusion}
\label{sec:conclusion}
In this paper, we proposed an algorithm based on a diffusion approach, to solve the PageRank equation. We demonstrated some theoretical results concerning the correctness (convergence), the precision measurement and update equations. Our algorithm shows its potential through experiments on real data in comparison with other classical pull (Power Iteration, Gauss-Seidel) and push (OPIC) methods. For future works, we plan to focus on refining the theoretical convergence results and on how to adapt and implement DI in a distributed manner. %Related problems (graph partitioning,...) of course would be also the subject of future investigations.

\bibliographystyle{splncs03}
\bibliography{trace}

%
% ---- Bibliography ----
%
%\begin{thebibliography}{5}
%%
%\bibitem {clar:eke}
%Clarke, F., Ekeland, I.:
%Nonlinear oscillations and
%boundary-value problems for Hamiltonian systems.
%Arch. Rat. Mech. Anal. 78, 315--333 (1982)
%
%\bibitem {clar:eke:2}
%Clarke, F., Ekeland, I.:
%Solutions p\'{e}riodiques, du
%p\'{e}riode donn\'{e}e, des \'{e}quations hamiltoniennes.
%Note CRAS Paris 287, 1013--1015 (1978)
%
%\bibitem {mich:tar}
%Michalek, R., Tarantello, G.:
%Subharmonic solutions with prescribed minimal
%period for nonautonomous Hamiltonian systems.
%J. Diff. Eq. 72, 28--55 (1988)
%
%\bibitem {tar}
%Tarantello, G.:
%Subharmonic solutions for Hamiltonian
%systems via a $\bbbz_{p}$ pseudoindex theory.
%Annali di Matematica Pura (to appear)
%
%\bibitem {rab}
%Rabinowitz, P.:
%On subharmonic solutions of a Hamiltonian system.
%Comm. Pure Appl. Math. 33, 609--633 (1980)
%
%\end{thebibliography}

%
\end{document}